\theoremstyle{plain}
\newtheorem{thm}{Theorem}[section] 
\newtheorem{prop}[thm]{Proposition}
\theoremstyle{definition}
\newtheorem{defn}[thm]{Definition} 
\newtheorem{remark}[thm]{Remark}
\newcommand{\defeq}{\vcentcolon=}
\newcommand{\mcF}{\mathcal{F}}
\newcommand{\mbR}{\mathbb{R}}
\newcommand{\iid}{\overset{\text{iid}}{\sim}}
\newcommand{\bH}{{\bf H}}
\newcommand{\ep}{\epsilon}
\newcommand{\De}{\Delta}
\newcommand{\RR}{\mathbb R}
\newcommand{\ta}{\theta}
\newcommand{\sa}{\sigma}
\newcommand{\twid}{\widetilde}
\newcommand{\de}{\delta}
\DeclareMathOperator{\tr}{trace}
\newcommand{\mscr}[1]{\mathcal #1}
\DeclareMathOperator{\argmin}{argmin}
\title{KNG: The K-Norm Gradient Mechanism}
\author{
  Matthew Reimherr \thanks{Research supported in part by NSF DMS 1712826, NSF SES 1853209, and the Simons Institute for the Theory of Computing at UC Berkeley.} \\
  Department of Statistics\\
  Pennsylvania State University \\
  State College, PA 16802 \\
  \texttt{mreimherr@psu.edu} \\
  \And
  Jordan Awan\\
  Department of Statistics\\
  Pennsylvania State University \\
  State College, PA 16802 \\
  \texttt{awan@psu.edu} \\
}
\begin{document}

\maketitle

\begin{abstract}
 This paper presents a new mechanism for producing sanitized statistical summaries that achieve {\it differential privacy}, called the {\it K-Norm Gradient} Mechanism, or KNG. This new approach maintains the strong flexibility of the exponential mechanism, while achieving the powerful utility performance of objective perturbation. KNG starts with an inherent objective function (often an empirical risk), and promotes summaries that are close to minimizing the objective by weighting according to how far the gradient of the objective function is from zero.  Working with the gradient instead of the original objective function allows for additional flexibility as one can penalize using different norms.  We show that, unlike the exponential mechanism, the noise added by KNG is asymptotically negligible compared to the statistical error for many problems. In addition to theoretical guarantees on privacy and utility, we confirm the utility of KNG empirically in the settings of linear and quantile regression through simulations. 
\end{abstract}

\section{Introduction}\label{s:intro}
The last decade has seen a tremendous increase in research activity related to data privacy \citep{Aggarwal2008general,Lane2014,Machanavajjhala2015,Dwork2017}.  This drive has been fueled by an increasing societal concern over the large amounts of data being collected by companies, governments, and scientists.  These data often contain vast amounts of personal information, for example DNA sequences, images, voice recordings, electronic health records, and internet usage patterns.  Such data allows for great scientific progress by researchers and governments, as well as increasingly curated business strategies by companies.  However, the such data also comes with increased risk for privacy breaches, placing greater pressure on institutions to prevent disclosures.  

Currently, \emph{Differential Privacy} (DP) \citep{Dwork2006:Sensitivity} is the leading framework for formally quantifying privacy risk.
One of the most popular methods for achieving DP is the {\it Exponential Mechanism}, introduced by \citet{McSherry2007}, and used in \citep{friedman2010data,Wasserman2010,blum2013learning,Dwork2014}.  A major attribute of the exponential mechanism that contributes to its popularity is its flexibility; it can be readily adapted and incorporated into most statistical analyses.  In particular, its structure makes it amenable to a wide array of statistical and machine learning problems that are based on minimizing an objective function, so called ``$m$-estimators'' \citep[Chapter 5]{VanDerVaart2000}. Some examples where the exponential mechanism has been used include PCA \citep{Chaudhuri2013,Awan2019}, hypothesis testing \citep{Canonne2019}, maximum likelihood estimation (related to posterior sampling) \citep{Wang2015:PrivacyFree,Minami2016}, and density estimation \citep{Wasserman2010}.


However, examples have arisen \citep{Wang2015:PrivacyFree,Awan2019} where the magnitude of the noise added by the exponential mechanism is substantially higher than other mechanisms.  Recently, \citet{Awan2019}, demonstrated that, in a very broad sense, the exponential mechanism adds noise that is not asymptotically negligible relative to the statistical estimation error, 
which other mechanism are able to achieve in different problems \cite[e.g.][]{Smith2011}.
In this paper we provide a new mechanism called the {\it K-Norm Gradient Mechanism}, or {\it KNG}, that retains the flexibility of the exponential mechanism, but with substantially improved utility guarantees. {  KNG provides a principled approach to developing efficient mechanisms that also perform well in practice.  Indeed the Laplace, $K$-norm, and PrivateQuantile mechanisms can all be viewed as instantiations of KNG.  Here we also use KNG to provide the first mechanism for private quantile regression that we are aware of, which we empirically show is efficient.} 

At a high level, KNG uses a similar perspective to that of the exponential mechanism.  In particular, suppose that $\ell_n(\theta;D)$ is an objective, whose minimizer, $\hat \theta \in \mbR^d$, is the summary we aim to sanitize.  Here $D$ represents the particular database and $n$ the sample size of $D$.  The exponential mechanism aims to release $\tilde \theta_E$ based on the density 
\[ f_E(\theta) \propto \exp\{- c_0 \ell_n(\theta;D) \}, \]
where $c_0$ is a generic constant determined by the sensitivity of $\ell_n$ and the desired level of privacy.  Conceptually, the idea is to promote sanitized estimates whose utility, as measured by $\ell_n$, is close to that of $\hat \theta$.  Unfortunately, \citet{Awan2019}, showed that the magnitude of the noise added by the exponential mechanism is often of the same order as the statistical error (as a function of $n$), {  resulting in inefficient private estimators}.  KNG uses a similar perspective, but takes the gradient of $\ell_n$ and promotes $\theta$ that are close to the solution $\nabla \ell_n(\hat \theta) = 0$.  Since we work with the gradient, we also have the flexibility of choosing a desirable norm, which \citet{Awan2018:Structure} showed can be tailored to the problem at hand to achieve better utility.  The resulting mechanism produces a sanitized $\tilde \theta$ according to the density
\[
f_n(\theta) \propto \exp\{- c_0 \| \nabla \ell_n(\theta;D)\|_K\}, 
\]
where $\|\cdot \|_K$ is a general norm on $\mbR^d$ that can be chosen to accommodate the context of the problem. Here we see a connection between KNG and the \emph{$K$-norm mechanism}, introduced by \citet{Hardt2010}. The terminology is based on the idea of considering a set $K$ which is the convex hull of the \emph{sensitivity polytope} \citep{Kattis2017}, and defining $\lVert \cdot \rVert_K$ to be the norm such that the ball of radius one is $K$, i.e. $\{v\in \RR^d\mid \lVert v\rVert_K=1\}=K$. In fact every norm can be generated in this manner, so no there is no loss in generality from using this approach \citep{Awan2018:Structure}.

KNG can similarly be viewed as a modification of objective perturbation \citep{Chaudhuri2011,Kifer2012:PrivateCERM}.  There, one releases a sanitized estimate, $\tilde \theta_O$, by minimizing\footnote{In fact, objective perturbation minimizes $\ell_n(\theta;D)+ c \theta^\top \theta + \omega \theta^\top b$, where $c$ is a constant. We ignore this regularization term in this discussion for the simplicity of the illustration.}
\[
\tilde \theta_O = \argmin_{\theta \in \Theta} \left(
\ell_n(\theta;D)+ \omega \theta^\top b \right),
\]
where $b \in \mbR^d$ is a random vector with distribution drawn from the $K$-norm mechanism $f_b(x) \propto \exp\{-\lVert b\rVert_K\}$, and $\omega \in \mbR$ is a fixed constant based on the sensitivity of $\ell_n$ and the desired level of privacy\footnote{In \citet{Chaudhuri2011} and \citet{Kifer2012:PrivateCERM}, the $\ell_2$ norm is used. \citet{Awan2018:Structure} extend objective perturbation to allow for arbitrary norms.}.   Equivalently, one has that
$\nabla \ell_n(\tilde \theta_O; D) + \omega b = 0$, 
which implies that 
$\tilde \theta_O = \nabla \ell_n^{-1}(-\omega b)$, 
assuming $\nabla \ell_n$ is invertible.  Using the change of variables formula, this implies that $\tilde \theta_O$ has density
\[
f_O(\theta) \propto \exp\{ -\omega^{-1} \lVert\nabla \ell_n(\theta)\rVert_K\} |\mbox{det}(\nabla^2 \ell_n(\theta) )|. 
\]
With KNG, the second derivative term $\nabla^2 \ell_n$ is not included. 
Furthermore, there are several technical requirements when working with objective perturbation that KNG sidesteps. In particular, the proof that objective perturbation satisfies DP requires the objective function to be strongly convex and twice differentiable almost everywhere \citep{Chaudhuri2011,Kifer2012:PrivateCERM,Awan2018:Structure}. While we assume strong convexity and a second derivative to prove a utility result in Theorem \ref{thm:utility1}, KNG  does not require either of these conditions to satisfy DP. This allows the KNG mechanism to be applied in more general situations (such as median estimation and quantile regression, explored in Section \ref{s:ex}), and requires fewer calculations to implement.

The remainder of this paper is organized as follows. In Section \ref{s:DP} we recall the necessary background on differential privacy and the exponential mechanism. In Section \ref{s:grad} we formally define KNG and show that it achieves $\ep$-DP with nearly the same flexibility as the exponential mechanism.  We also provide a general utility result that shows that the noise introduced by KNG is of order $O_p(n^{-1})$, which is negligible compared to the statistical estimation error, which is typically $O_p(n^{-1/2})$. We also show that the noise introduced by KNG is asymptotically from a $K$-norm mechanism. In section \ref{s:ex} we provide several examples of KNG applied to statistical problems, including mean estimation, linear regression, median/quantile estimation, and quantile regression. We also  illustrate the empirical advantages of KNG in the settings of linear and quantile regression through simulations.  We conclude in Section \ref{s:con} by discussing challenges and potential extensions of KNG.

\section{Differential Privacy Background}\label{s:DP}
Differential privacy (DP), introduced by \citet{Dwork2006:Sensitivity} has taken hold as the primary framework for formally quantifying privacy risk. Several versions of DP have been proposed, such as approximate DP \citep{Dwork2014}, concentrated DP \citep{Dwork2016,Bun2016}, and local DP \citep{duchi2013local}, all of which fit into the axiomatic treatment of formal privacy given by \citet{Kifer2012:Axiomatic}. In this paper, we work with pure $\ep$-DP, stated in Definition \ref{def:dp}.

Let $\mathcal D^n$ denote the collection of all possible databases with $n$ units. 
  The bivariate function $\delta:\mathcal D^n \times \mathcal D^n \rightarrow \mathbb R$, 
  which maps $\delta(D,D') \defeq \#\{i \mid D_i \neq D'_i\}$, is called the \emph{Hamming Distance} on $\mathcal D^n$.
\noindent It is easy to verify that $\de$ is a metric on $\mathcal D^n$. If $\delta(D,D')=1$ then $D$ and $D'$ are said to be \emph{adjacent}.

Let $f:\mathcal D^n \to \Theta$ represent a summary of $\mathcal D^n$, and $\mcF$ a $\sigma$-algebra on $\Theta$, such that $(\Theta, \mcF)$ is a measurable space. 
A {\it privacy mechanism} is a family of probability measures $\{ \mu_D: D \in \mathcal D^n\}$ over $\Theta$.

\begin{defn}[Differential Privacy: \citealp{Dwork2006:Sensitivity}]\label{def:dp}
A privacy mechanism $\{\mu_D : D \in \mscr D^n \}$  
satisfies $\ep$-Differential Privacy ($\ep$-DP) if for all $B \in \mcF$ and  adjacent $D,D' \in \mscr D^n$,  
\[
\mu_D(B) \leq \mu_{D'}(B) \exp(\ep).
\]
\end{defn}

The exponential mechanism, introduced by \citet{McSherry2007} is a central tool in the design of DP mechanisms \citep{Dwork2014}. In fact every mechanism can be viewed as an instance of the exponential mechanism, by setting the objective function as the log-density of the mechanism. In practice, it is most common to  
set the objective as a natural loss function, such as an empirical risk.

\begin{prop}[Exponential Mechanism: \citealp{McSherry2007}]\label{prop:ExpMech}
Let $(\Theta, \mscr F, \nu)$ be a measure space, and let $\{ \ell_n(\theta;D): \Theta \rightarrow \RR \mid D\in \mscr D^n\}$ be a collection of measurable functions indexed by the database $D$.  We say that this collection has a 
 finite \emph{sensitivity} $\Delta$, if 
\[ 
|\ell_n(\theta;D)- \ell_n(\theta;D')| \leq \Delta < \infty,\]
for all adjacent $D,D'$ and $\nu$-almost all $\theta \in \Theta$. 
If $\int_{\Theta} \exp(-\ell_n(\theta;D)) \ d\nu(\theta)<\infty$ for all $D\in \mathcal D$, then the collection of probability measures $\{\mu_D\mid D\in \mathcal D\}$ with densities (with respect to $\nu$) 
\[f_D(\theta) \propto \exp\left\{\left(\frac{-\ep}{2\Delta}\right)\ell_n(\theta;D)\right\}\quad
\text{satisfies $\ep$-DP.}\]
\end{prop}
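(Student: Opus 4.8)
The plan is to run the classical exponential-mechanism argument in the general dominated setting of the statement. Write $g_D(\theta) \defeq \exp\{(-\ep/2\Delta)\,\ell_n(\theta;D)\}$ for the unnormalized density and $c_D \defeq \int_\Theta g_D(\theta)\,d\nu(\theta)$ for its normalizing constant, so that $f_D = g_D/c_D$ and $\mu_D(B) = c_D^{-1}\int_B g_D\,d\nu$. The first thing to pin down is that each $\mu_D$ is a genuine probability measure: $c_D > 0$ because $g_D$ is everywhere positive and $\nu$ is not the zero measure, while $c_D < \infty$ is what the integrability hypothesis $\int_\Theta \exp(-\ell_n(\theta;D))\,d\nu(\theta) < \infty$ is there to supply (applied, as needed, to the rescaled objective $(\ep/2\Delta)\ell_n$; alternatively one may simply take finiteness of $c_D$ as a hypothesis). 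This is the only step that calls for any measure-theoretic care.

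Everything else is an inequality chase driven by finite sensitivity. Fix adjacent $D,D'$. For $\nu$-almost every $\theta$ the bound $|\ell_n(\theta;D)-\ell_n(\theta;D')|\le\Delta$ gives $-\tfrac{\ep}{2\Delta}\ell_n(\theta;D)\le -\tfrac{\ep}{2\Delta}\ell_n(\theta;D')+\tfrac{\ep}{2}$, hence
\[
g_D(\theta)\ \le\ e^{\ep/2}\,g_{D'}(\theta)\qquad\text{for }\nu\text{-a.e.\ }\theta.
\]
Integrating over $\Theta$ yields $c_D\le e^{\ep/2}c_{D'}$, and by symmetry $c_{D'}\le e^{\ep/2}c_D$ (so, incidentally, finiteness of one normalizing constant forces finiteness of the other). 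Dividing the displayed pointwise bound by $c_D$ and then using $1/c_D\le e^{\ep/2}/c_{D'}$ gives $f_D(\theta)\le e^{\ep}f_{D'}(\theta)$ for $\nu$-a.e.\ $\theta$. Finally, for any $B\in\mcF$,
\[
\mu_D(B)=\int_B f_D\,d\nu\ \le\ e^{\ep}\int_B f_{D'}\,d\nu\ =\ e^{\ep}\,\mu_{D'}(B),
\]
which is precisely the condition of Definition \ref{def:dp}, since $B$ and the adjacent pair $D,D'$ were arbitrary.

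I do not anticipate a real obstacle here. The factor $2$ in the exponent $-\ep/(2\Delta)$ is calibrated so that the pointwise density ratio contributes $e^{\ep/2}$ and the ratio of normalizing constants contributes a further $e^{\ep/2}$, combining to $e^{\ep}$. The only points deserving attention are the well-definedness of the $\mu_D$ (positivity and finiteness of $c_D$, as discussed above) and the bookkeeping that the sensitivity inequality — and hence every pointwise bound in the argument — holds merely $\nu$-almost everywhere, which is harmless because these bounds are only ever used under an integral sign.
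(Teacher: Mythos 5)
Your proof is correct, and it is exactly the standard McSherry--Talwar argument: the sensitivity bound gives a pointwise factor of $e^{\ep/2}$ between unnormalized densities, the same bound applied to the normalizing constants gives another $e^{\ep/2}$, and the two combine to $e^{\ep}$ after integrating over $B$. The paper does not reprove this proposition (it cites \citet{McSherry2007} and uses it as a black box to derive Theorem 3.1), so there is nothing to diverge from; you also rightly flag, and handle sensibly, the small mismatch in the statement between the hypothesis $\int_\Theta \exp(-\ell_n(\theta;D))\,d\nu(\theta)<\infty$ and the finiteness of the actual normalizing constant for $\exp\{-(\ep/2\Delta)\ell_n\}$, which is what the argument really needs.
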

Intuitively, $\ell_n(\theta;D)$ provides a score quantifying the utility of an output $\theta$ for the database $D$. We use the convention that smaller values of $\ell_n(\theta;D)$ provide more utility. 
So, the exponential mechanism places more mass near the minimizers of $\ell$, and less mass the higher the value of $\ell_n(\theta;D)$.

\section{The K-Norm Gradient Mechanism}\label{s:grad}
In Section \ref{s:DP} we considered an arbitrary measure space, $(\theta,\mathcal F, \nu)$, when defining DP and the exponential mechanism. However, here we focus on $\mbR^d$.  The KNG mechanism cannot be defined to quite the generality of the exponential mechanism since we require enough structure on the parameter space to define a gradient.   Most applications focus on Euclidean spaces, so this is not a major practical concern, but there could be implications for more complicated nonlinear, discrete, or infinite dimensional settings.

\begin{thm}[$K$-Norm Gradient Mechanism (KNG)]\label{thm:kng}
Let $\Theta \subset \mathbb R^d$ be a convex set, $\lVert \cdot \rVert_K$ be a norm on $\mathbb R^d$, and $\nu$ be a $\sa$-finite measure on $\Theta$.  Let $\{ \ell_n(\theta;D): \Theta \rightarrow \mathbb R \mid D\in \mscr D^n\}$ be a collection of measurable functions, which are differentiable $\nu$ almost everywhere.  We say that this collection has 
  \emph{sensitivity} $\Delta:\Theta\rightarrow \RR^+$, if 
\[ 
\lVert \nabla\ell_n(\theta;D)- \nabla\ell_n(\theta;D')\rVert_K \leq \Delta(\theta) < \infty,\]
for all adjacent $D,D'$ and $\nu$-almost all $\theta$. 
If $\int_{\Theta} \exp(-\frac{1}{\Delta(\ta)}\lVert \nabla \ell_n(\theta;D)\rVert_K) \ d\nu(\theta)<\infty$ for all $D\in \mathcal D$, then the collection of probability measures $\{\mu_D\mid D\in \mathcal D\}$ with densities  (with respect to $\nu$) 
\[f_D(\theta) \propto \exp\left[\left(\frac{-\ep}{2\Delta(\theta)}\right)\lVert \nabla \ell_n(\theta;D)\rVert_K\right]
\quad \text{satisfies $\ep$-DP.}\]
\end{thm}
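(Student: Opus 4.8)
The plan is to follow the template of the exponential-mechanism proof (Proposition~\ref{prop:ExpMech}), the only new ingredient being the triangle inequality for $\lVert\cdot\rVert$ applied to the gradient increment. Write $g_D(\theta) \defeq \exp\!\left[\left(\frac{-\ep}{2\Delta(\theta)}\right)\lVert \nabla \ell_n(\theta;D)\rVert\right]$ for the unnormalized density and $Z_D \defeq \int_\Theta g_D(\theta)\,d\nu(\theta)$, which is finite and strictly positive by hypothesis, so that $f_D = g_D/Z_D$ is a bona fide density and $\mu_D(B) = Z_D^{-1}\int_B g_D\,d\nu$ for all $B \in \mcF$.

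First I would establish the pointwise comparison. Fix adjacent $D,D'$ and any $\theta$ at which both gradients exist and the sensitivity bound holds — this excludes only a $\nu$-null set, which is irrelevant for the integrals that follow. By the triangle inequality and the sensitivity assumption,
\[
\lVert \nabla\ell_n(\theta;D)\rVert \le \lVert \nabla\ell_n(\theta;D')\rVert + \lVert \nabla\ell_n(\theta;D) - \nabla\ell_n(\theta;D')\rVert \le \lVert \nabla\ell_n(\theta;D')\rVert + \Delta(\theta).
\]
Multiplying by $-\ep/(2\Delta(\theta))$ — and here the $\Delta(\theta)$ in the denominator is exactly what turns the sensitivity term into the $\theta$-free constant $-\ep/2$ — and exponentiating gives $g_D(\theta) \ge e^{-\ep/2}g_{D'}(\theta)$; by the symmetry of the argument in $D$ and $D'$, also $g_{D'}(\theta) \ge e^{-\ep/2}g_D(\theta)$.

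Then I would integrate this comparison. From $g_D \le e^{\ep/2}g_{D'}$ holding $\nu$-a.e., integration over $\Theta$ yields $Z_D \le e^{\ep/2}Z_{D'}$, and symmetrically $Z_{D'} \le e^{\ep/2}Z_D$. Hence for any $B \in \mcF$,
\[
\mu_D(B) = \frac{1}{Z_D}\int_B g_D\,d\nu \le \frac{e^{\ep/2}}{Z_D}\int_B g_{D'}\,d\nu = e^{\ep/2}\,\frac{Z_{D'}}{Z_D}\,\mu_{D'}(B) \le e^{\ep}\,\mu_{D'}(B),
\]
which is precisely $\ep$-DP.

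I do not expect a real obstacle: this is essentially a routine adaptation of the exponential-mechanism argument, and the one genuinely new point is the observation that dividing the norm penalty by $\Delta(\theta)$ makes the sensitivity term cancel cleanly in the exponent (so no supremum over $\theta$ of the sensitivity is needed). What does need a little care is bookkeeping rather than mathematics: checking that $g_D$ is measurable (it is a composition of the $\nu$-a.e.-defined measurable map $\theta\mapsto\nabla\ell_n(\theta;D)$ with the continuous functions $\lVert\cdot\rVert$ and $\exp$), confirming that the null sets on which a gradient is undefined or the sensitivity bound fails do not affect $Z_D$ or $\mu_D(B)$, and noting that $\Delta(\theta)>0$ is being assumed implicitly (where $\Delta(\theta)=0$ the displayed inequality degenerates to $g_D(\theta)=g_{D'}(\theta)$, consistent with any continuity convention for the density). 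Notably, strong convexity and twice-differentiability of $\ell_n$ — the hypotheses objective perturbation requires — play no role anywhere in this proof, which is the advertised advantage of KNG.
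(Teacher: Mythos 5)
Your proposal is correct and takes essentially the same route as the paper: the paper's proof simply sets $\twid\ell_n(\theta;D)=\Delta(\theta)^{-1}\lVert\nabla\ell_n(\theta;D)\rVert$, notes (via the triangle inequality, exactly your key step) that this rescaled objective has sensitivity $1$, and then invokes Proposition~\ref{prop:ExpMech}, whereas you inline that proposition's argument (pointwise ratio bound on the unnormalized densities plus the normalizing-constant comparison) instead of citing it. The point you highlight---that dividing by $\Delta(\theta)$ turns the sensitivity term into the $\theta$-free constant $\ep/2$---is precisely the paper's reduction, so no gap.
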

\begin{proof}
Set $\twid \ell_n(\ta;D) = \Delta(\ta)^{-1}\lVert\nabla\ell_n(\ta;D)\rVert_K$. Then $\twid \ell$ has sensitivity 1. By Proposition \ref{prop:ExpMech}, the described mechanism satisfies $\ep$-DP.
\end{proof}

One advantage of this approach over the traditional exponential mechanism is that the sensitivity calculation is often simpler (e.g. quantile regression, subsection \ref{s:QR}).  However, it also has the same intuition as the exponential mechanism.  In particular, the optimum, $\hat \theta$, occurs when $\nabla \ell_n(\hat \theta) = 0$, thus we want to promote solutions that make the gradient close to 0,  and discourage ones that make the gradient far from $0$. These concepts are closely related to $m$-estimators, $z$-estimators, and estimating equations \citep[Chapter 5]{VanDerVaart2000}.

Since KNG utilizes the gradient, it links in nicely to optimization methods such as gradient descent.  However, it could also suffer from some of the same challenges as gradient descent.  Namely, if the objective function has multiple local minima, then KNG will promote output near each these points.  For this reason, a great deal of care should be taken with KNG when applying to non-convex objective functions, such as fitting neural networks \citep{gori1992problem}.

\subsection{Asymptotic Properties}

While flexibility of a mechanism is an important concern, ultimately the utility of the output is of primary importance. \citet{Awan2019} showed that for a large class of objective functions, the exponential mechanism introduces noise of magnitude $O_p(n^{-1/2})$, where $n$ is the sample size. For many statistical problems the non-private error rate is also $O_p(n^{-1/2})$ \citep[Chapter 5]{VanDerVaart2000}, meaning that the exponential mechanism introduces noise that is not asymptotically negligible. 

Under similar assumptions, we show in Theorem \ref{thm:utility1} that KNG has aymptotic error  $O_p(n^{-1})$, which is asymptotically negligible compared to the statistical error. In fact, Theorem \ref{thm:utility1} shows that the noise introduced is asymptotically from a $K$-norm mechanism \citep{Hardt2010,Awan2018:Structure}, which generalizes the Laplace mechanism. 

The assumptions in Theorem \ref{thm:utility1} are chosen to capture a large class of common loss functions, which include many convex empirical risk functions and log-likelihood functions. Mathematically, the assumption that $\ell$ is twice-differentiable and strongly convex allow us to use a one term Taylor expansion of $\nabla \ell$ about $\hat \ta$, and guarantee that the integrating constants converge. The proof of Theorem \ref{thm:utility1} is found in the Supplementary Materials.

\begin{thm}[Utility of KNG]\label{thm:utility1}
{  Let $\Theta\subset \RR^d$ be a convex set, $\lVert \cdot \rVert_K$ a norm on $\RR^d$, $\nu$ a $\sigma$-finite measure om $\Theta$, and $\ell_n(\theta)\defeq \ell_n(\theta;D)$ be a sequence of objective functions which satisfy the assumptions of Theorem \ref{thm:kng}, with sensitivity $\Delta(\theta)$. 
We further assume that}
\begin{enumerate}
    \item $n^{-1}\ell_n(\theta)$ 
    are twice differentiable (almost everywhere) convex functions and there exists 
    a finite $\alpha>0$ 
    such that $n^{-1} \bH_n(\theta)$ has eigenvalues greater than $\alpha$. 
    for all $n$ and $ \theta \in \Theta$;
    \item the minimizers satisfy $\hat \theta \to \theta^\star \in \mathbb R^d$ and $n^{-1}\bH_n(\hat \theta) \to \Sigma^{-1}$ where $\Sigma$ is a $d \times d$ positive definite matrix;
    \item 
    {  $\Delta(\theta)$ is continous in $\theta$, constant in $n$, and 
    there exists $\Delta>0$ such that  $\Delta\leq \Delta(\theta).$}
\end{enumerate}
Assume the base measure, $\nu$, has a 
bounded, differentiable density $g(\theta)$ (with respect to Lebesgue measure) which is strictly positive in a neighborhood of $\theta^\star$. 
Then the sanitized value $\tilde \theta$ drawn from the KNG mechanism with privacy parameter $\ep$ is asymptotically $K$-norm. That is, the density of $Z=n(\tilde \theta -\hat \theta )$ converges to a $K$-norm distribution, with density (wrt $\nu$) proportional to 
$\displaystyle
f(z) \propto \exp\left(\frac{-\ep}{2\Delta(\theta^*)} \lVert \Sigma^{-1} z\rVert_K\right).$
\end{thm}

{  The proof of the CLT for the exponential mechanism in \citet{Awan2019}, as well as the proof of Theorem 3.2, both rely on a Taylor expansion of the objective function. In both cases, it is assumed that the Hessian converges, when scaled by $n$, to a positive definite matrix. However, using the original objective function requires two derivatives before the Hessian appears in the Taylor expansion, whereas the use of the gradient only requires one derivative. The consequence of this is that the traditional exponential mechanism results in a quadratic numerator inside the exponent, whereas KNG has a (normed) linear numerator.  Asymptotically, this gives an $O_p(n^{-1/2})$ Gaussian noise for the exponential mechanism and an $O_p(n^{-1})$  $K$-norm noise for KNG. Geometrically, it seems that the use of an objective function which behaves linearly (in absolute value) near the optimum, rather than quadratic, results in better asymptotic utility. By using the normed-gradient, we construct an objective function with this property. }

{  The assumptions in Theorem 3.2 are very similar to the assumptions for the CLT in \citet{Awan2019}. So, whenever these properties hold, we know that KNG results in an $O_p(n^{-1})$ privacy noise whereas the exponential mechanism is $O_p(n^{-1/2})$. To further emphasize the importance of this result, we note that the magnitude of the noise introduced for privacy can have a substantial impact on the sample complexity. Asymptotically, KNG requires exactly the same sample size as the non-private estimator, whereas the exponential mechanism requires a constant $>1$ multiple of the non-private sample size to achieve the same accuracy. }

As we see in Section \ref{s:ex}, in the problem of quantile regression the assumptions of Theorem \ref{thm:utility1} do not hold, meaning that while we guarantee privacy in that setting, we can't guarantee the utility of the estimator. However, we see in Figure \ref{fig:QuantilePlot} that KNG still introduces asymptotically negligible noise, suggesting that the assumptions of Theorem \ref{thm:utility1} can likely be weakened to accomodate a larger class of objective functions.

\begin{remark}\label{rmk:ObjPert}
Based on the discussion in Section \ref{s:intro}, a result similar to \ref{thm:utility1} may hold for objective perturbation as well. The main issue is dealing with the change of variables factor $|\det \bf H_n(\theta)|$, which may or may not contribute to the asymptotic form. We suspect that when both KNG and objective perturbation are applicable (e.g. linear regression, see subsection \ref{s:linSims}), they will have similar performance. However, as KNG does not require a second derivative (or convexity), it is applicable in more settings than objective perturbation (e.g. quantile regression, see subsection \ref{s:QR}). 
\end{remark}

\section{Examples}\label{s:ex}
\subsection{Mean Estimation}
Mean estimation is one of the simplest statistical tasks, and one of the first to be solved in DP. Assuming bounds on the data, the mean can be estimated by adding Laplace noise \citep{Dwork2006:Sensitivity}. Recently there has been some work developing statistical tools for the mean under differential privacy, such as confidence intervals in the normal model \citep{Karwa2017} and hypothesis tests for Bernouilli data \citep{Awan2018:Binomial}. We show that KNG recovers the $K$-norm mechanism when estimating the mean, a generalization of the Laplace mechanism.

Let $x_1,\ldots, x_n\in \RR^d$, which we assume are drawn from some population with mean $\theta^*$. To estimate $\theta^*$, we use the  sum of squares as our objective function:
\[
\ell_n(\theta;D) = \sum_{i=1}^n \|x_i - \theta\|_2^2
\qquad \text{and} \qquad
\nabla \ell_n(\theta;D) = -2\sum_{i=1}^n (x_i - \theta)
= -2n( \bar x - \theta).
\]
Turning to the sensitivity, if we assume that there exists a constant $r$ such that $\|x_i\|_K \leq r < \infty$ for some norm $\|\cdot \|_K$, then the sensitivity of the gradient is 
$\|\nabla \ell_n(\theta;D) - \nabla \ell_n(\theta;D')\|_K
= 2 \|x_1 - x_1'\|_K \leq 4r.$ 
Thus the mechanism becomes 
$\displaystyle
f_{n}(\theta) \propto 
\exp \left\{
- (n \epsilon/(4r))\left\| \bar x - \theta\right\|_K
\right\},$ 
which is exactly a K-norm mechanism \citep{Hardt2010}. 
So $\tilde \theta - \bar x$ has mean $0$ and standard deviation $ O_p(n^{-1})$.  Thus, the noise added for privacy is asymptotically negligible compared to the statistical error $O_p(n^{-1/2})$.

\begin{remark}
Because the KNG results in a location family in this case, the integrating constant does not depend on the data. So, we do not need to divide $\epsilon$ by 2 in the density, and may instead draw from $f_n(\theta) \propto \exp\left\{\frac{n\epsilon}{2r} \left\| \bar x- \theta \right \|_K\right\}$, which is how the $K$-norm mechanism is normally stated.
\end{remark}

\subsection{Linear Regression}
There has been a great deal of work developing DP methods for linear regression \citep{Zhang2012,Song2013,Dwork2009,Chaudhuri2011,Kifer2012:PrivateCERM,Sheffet2017}.  In this section, we detail how KNG can be used to estimate the coefficients in a linear regression model. We observe pairs of data $(x_i,y_i)$, where $y_i \in \RR$ and $x_i\in \RR^d$, which we assume are modeled as  
$y_i = x_i^\top \theta^* + e_i,$ 
where the errors are iid with mean zero and are uncorrelated with $x$. Our goal is to estimate $\theta^*$. To implement KNG, we assume that the data has been pre-processed such that $-1\leq x_{i}\leq 1$ and $-1\leq y_i \leq 1$ for all $i=1,\ldots,n$. We also assume that $\lVert \theta^* \rVert_1 \leq B$. The usual non-private estimator for $\theta^*$ is the least-squares,
 which minimizes 
the objective function $\ell(\theta; D) = \sum_{i=1}^n (y_i - x_i^\top \theta)^2$. KNG requires a bound on the sensitivity of $\nabla\ell_n$:
\begin{align*}
    \lVert \nabla \ell_n(\theta;D) - \nabla \ell_n(\theta;D') \rVert 
    &\leq \sup_{y_1,x_1,\theta}4 \lVert (y_1 - x_1^\top \theta)x_1 \rVert
    =\sup_{x_1}4(1 + B) \lVert x_1 \rVert.
\end{align*}
By using the $\ell_\infty$ norm, we get the tightest bound, since $\lVert x_1 \rVert_\infty\leq 1$. KNG samples from the density 
\begin{equation}\label{eq:LR}
    f_n(\theta)\propto \exp \left( \frac{-\ep}{8(1+B)} \left\lVert \sum_{i=1}^n (y_i - x_i^\top\theta)x_1^\top\right\rVert_\infty\right),
\end{equation}
    with respect to the uniform measure on $\Theta = \{\theta \mid \lVert \ta\rVert_1\leq B\}$.

\begin{remark}
Alternative sensitivity bounds can be obtained by choosing other bounds on $x$ and $y$. The bound on $\theta^*$ can be removed entirely, allowing $\Delta$ to depend on $\theta$. In that case, a nontrivial base measure will be required as the resulting density is not integrable with respect to Lebesgue measure.  
We prefer to use the given sensitivity bound as it allows a fairer comparison against the exponential mechanism and objective perturbation in subsection \ref{s:linSims}.
\end{remark}

\begin{minipage}{.495\textwidth}
\includegraphics[width=\textwidth]{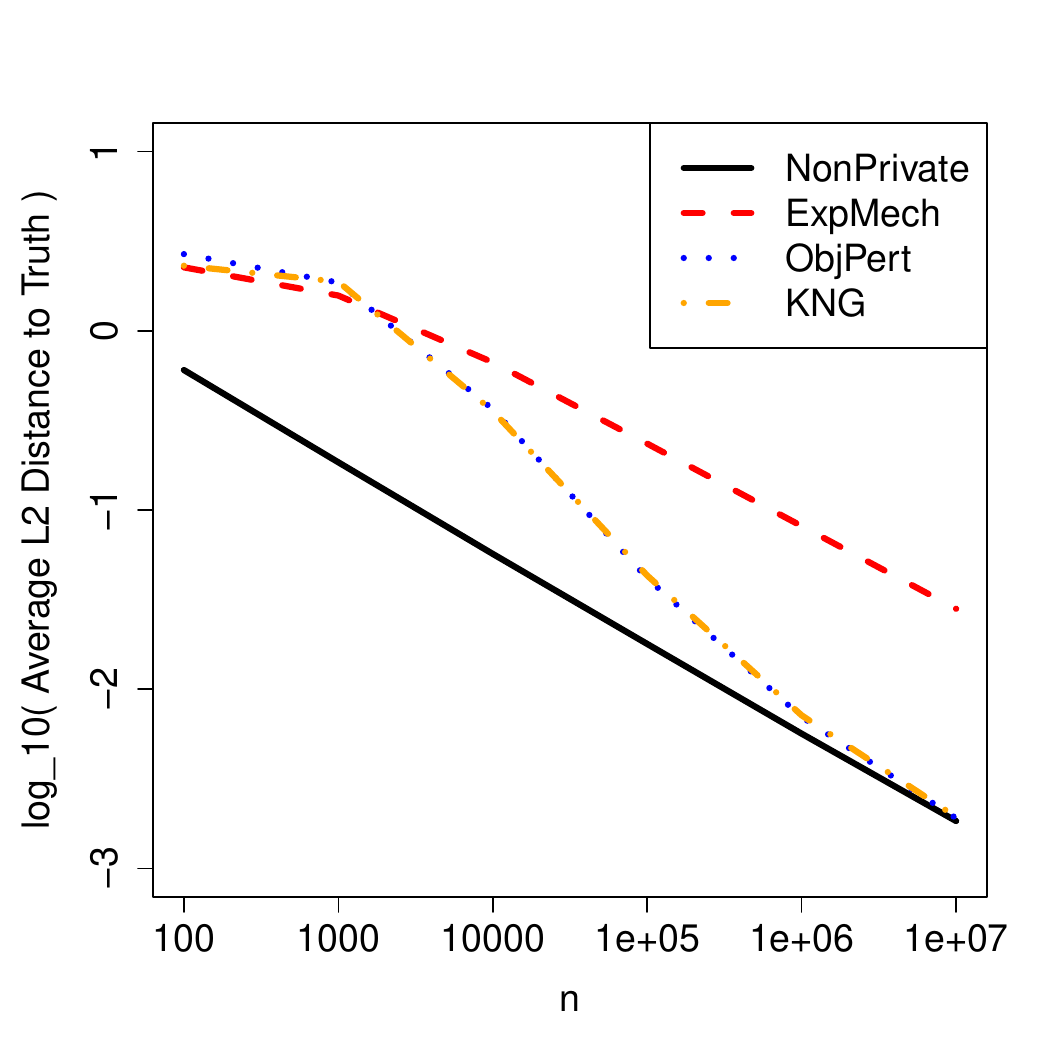}
\captionof{figure}{Simulation comparing the non-private MLE, exponential mechanism, objective perturbation, and KNG for linear regression.}
\label{fig:LinearPlot}
\end{minipage}
\hspace{.1cm}
\begin{minipage}{.495\textwidth}
\includegraphics[width=\textwidth]{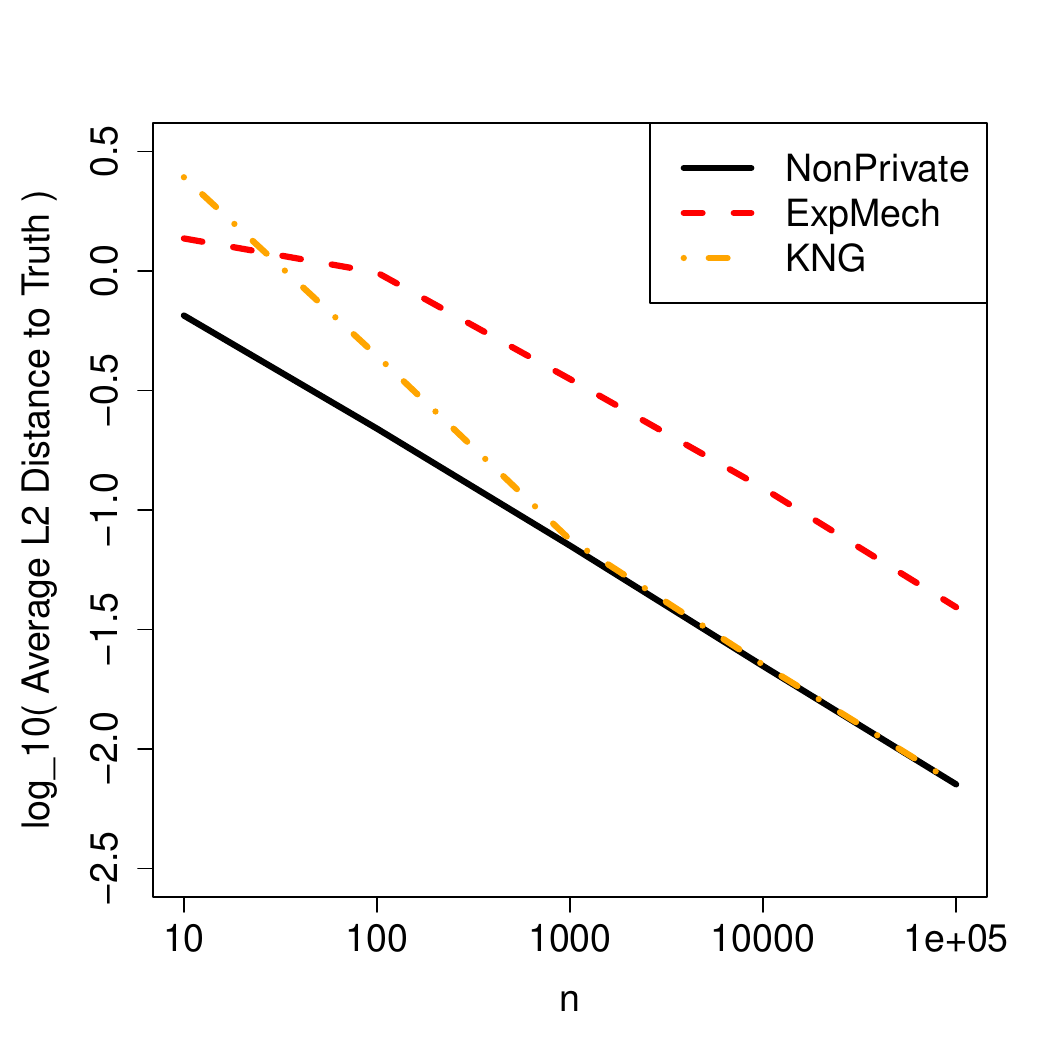}
\captionof{figure}{Simulation comparing the non-private, exponential mechanism, and KNG for quantile regression.}
\label{fig:QuantilePlot}
\end{minipage}

\subsection{Linear Regression Simulation} \label{s:linSims}

In this section, we examine the finite sample performance of the KNG mechanism on linear regression compared to the exponential mechanism and objective perturbation mechanism. KNG samples from the density \eqref{eq:LR}, the exponential mechanism samples from \[f_n(\theta)\propto \exp \left( \frac{-\ep}{2(1 + B)^2} \sum_{i=1}^n (y_i - x_i^\top \theta)^2\right),\]
and objective perturbation  draws a random vector $b$ from the density $f(b) \propto \exp\left(- \frac{\ep}{8(1+B)} \lVert b\rVert_\infty\right)$, and then finds the optimum of the modified objective: 
$\arg\min_{\lVert\theta\lVert_1\leq 1} \ell_n(\theta;D) + \frac{\gamma}{2} \theta^\top \theta + \theta^\top b,$ 
where $\gamma= (\exp(\ep/2)-1)^{-1}(2d)$ and $d$ is the dimension of the $x_i$'s. For all three mechanisms we assume the bound on $\lVert \theta^* \rVert_1$ is $B=1$. Details on these mechanisms for linear regression can be found in the Supplementary Materials. 

For the simulations the true regression vector $\theta^* \in \mathbb R^{12}$ is $\theta^* = (0,-1,-1+2/11,-1+4/11,\ldots,1-2/11)$, and so $d=12$. For each $n$ in $10^2,10^3,10^4,\ldots,10^7$ we run 100 replicates of Algorithm \ref{SimulationRegression} at $\epsilon=1$.  For KNG and exponential mechanism, we draw samples using a one-at-a-time MCMC procedure with 10000 steps.

At the end, we compute the average distance over the 100 replicates for each mechanism and for each sample size $n$. The results are plotted in Figure \ref{fig:LinearPlot}, taking the base 10 log of both axes.  At each $n$ value and for each mechanism, the Monte Carlo standard errors are between $0.01380$ and $0.02729$, in terms of the log-scale used in the plot. The benefit of plotting in this fashion is that it makes it easier to understand the asymptotic behavior of each estimator. 

Since we know that the estimation error of the non-private MLE is $\mathrm{error}=Cn^{-1/2}$, taking the log of both sides shows that the convergence should appear as a straight line with slope $-1/2$:
$\log(\mathrm{error}) =-\frac 12 \log(n) + \log(C),$ which is the black line in Figure \ref{fig:LinearPlot}.

As \citet{Awan2019} showed, the asymptotic estimation error of the exponential mechanism is $\mathrm{error}=Kn^{-1/2}$, where $K$ is a constant greater than $C$. Taking the log of both sides gives another line with slope $-1/2$, but with a higher intercept:
$\log(\mathrm{error}) = -\frac 12 \log(n) + \log(K),$ which we see in red in Figure \ref{fig:LinearPlot}.

On the other hand, for KNG and objective perturbation (based on Remark \ref{rmk:ObjPert}) , the asymptotic estimation error is $\mathrm{error}=Cn^{-1/2}+Kn^{-1}$, which when logged shows that for larger $n$, the curve approaches the line of the non-private estimation error from above:
$\log(\mathrm{error}) = -\frac 12 \log(n) + \log(C+Kn^{-1/2}),$ which is also confirmed in Figure \ref{fig:LinearPlot}.

\begin{algorithm}
\caption{Regression Simulation}
\scriptsize
INPUT: $n$, $\epsilon$, $d$, $\theta^*$.
\begin{algorithmic}[1]
  \setlength\itemsep{0em}
  \STATE  Generate $X\in \mathbb R^{n\times{d}}$ such that $X_{i,1}=1$ and $X_{ij} \iid U(-1,1)$ for $i=1,\ldots, n$ and $j=2,\ldots,d$.
\STATE Generate independent errors $e_i \sim N(0,1)$ for $i=1,\ldots, n$.
\STATE Compute the responses $Y_i = X_i\theta^* + e_i$.
\STATE Set $R = \max_{i} |Y_i|$.
\STATE Set $Y'_i = Y_i/R$.
\STATE Use $X$ and $Y'$ to estimate the regression coefficient via the non-private estimator, and each DP mechanism.
\STATE Multiply the estimates by $R$ to estimate $\theta^*$.
\STATE Compute the euclidean distance between the estimate and the true $\theta^*$ for each estimator.
\end{algorithmic}
OUTPUT: Average distances of the estimates to the true $\theta^*$.
\label{SimulationRegression}
\end{algorithm}

\subsection{Median Estimation}
Just as in the mean estimation problem, we observe $D = (x_1,\ldots, x_n)$, where $x_i \in \RR^d$, and our goal is to estimate the population median. In the case when $d=1$, the median can be estimated using the empirical risk function  
$\ell_n(\theta;D) = \sum_{i=1}^n |x_i -\theta|.$   
%
%
%
In general for $d \geq 1$, we are estimating the \emph{geometric median} \citep{minsker2015geometric}, which can be expressed as $\arg\min_m \mathbb E\lVert X-m\rVert$, and typically the euclidean norm is used. Now, our objective becomes $\ell_n(\theta;D) =\sum_{i=1}^n \lVert x_i - \theta\rVert$. It may be concerning that this objective is not differentiable everywhere, however, KNG only requires that the gradient exist on a set of measure one.   The gradient of $\|x_i - \theta\|$ in our norm's topology is given by $d(\theta,x_i):=\|x_i - \theta \|^{-1} (x_i - \theta)$, provided that $\theta\neq x_i$.  Notice that this gives a direction in $\mbR^d$ since $\|d(\theta,x_i)\| = 1$. Using the triangle inequality, we see that the sensitivity of the gradient is bounded by 2.
So the KNG mechanism for the median can be expressed as
\[f_{n}(\theta) \propto \exp \left\{ - \frac{\epsilon n}{4} \left\| \frac 1n\sum_{i=1}^n d(\theta,x_i)\right\|
\right\}.
\]
Again, the error introduced is  $O_p(n^{-1})$, which  is negligible compared to the statistical error.

\subsection{Quantile Regression}\label{s:QR}
For quantile regression as for linear regression, we observe pairs of data $(x_i,y_i)$, where $y_i \in \RR$ and $x_i\in \RR^d$. 
 We assume that $Q_{Y_i\mid X_i}(\tau) = X_i^\top\theta^*_\tau,$ 
for all $i=1,\ldots, n$, where $Q_{Y\mid X}(\tau)$ is the conditional quantile function of $Y$ given $X$ for $0<\tau<1$, and $\theta^* \in \RR^p$ \citep{hao2007quantile}. For a given $\tau$, $\theta^*_\tau$ can be estimated as $\hat \theta_\tau = \arg\min_\theta \sum_{i=1}^n \rho_\tau(y_i - x_i^\top\theta),$ 
where $\rho_\tau(z) = (\tau-1)zI(z\leq 0) + \tau zI(z>0)$ is called the \emph{tiled absolute value function} \citep{koenker2001quantile}.  So, our objective function is 
\[\ell_n(\theta;D) = (\tau-1)\sum_{y_i\leq x_i^\top \theta} (y_i - x_i^\top \theta) + \tau \sum_{y_i>x_i^\top\theta} (y_i - x_i^\top \theta),\]
with gradient (almost everywhere) 
\[\nabla \ell_n(\theta;D) = (\tau-1) \sum_{y_i\leq x_i^\top \theta} (-x_i) + \tau\sum_{y_i >x_i^\top \theta} (-x_i)
=-\tau \sum_{i=1}^n x_i + \sum_{y_i \leq x_i^\top \theta} x_i.\]
We bound the sensitivity as $\Delta = 2(1-\tau)C_X$, where $\sup_{x_1} \lVert x_1\rVert \leq C_X$. Then KNG samples from  
\begin{equation}\label{eq:QR}
    f_n(\theta)\propto \exp\left\{ \frac{-\ep n}{4(1-\tau)C_X} \left\lVert -\tau \frac 1n\sum_{i=1}^n x_i + \frac 1n\sum_{y_i \leq x_i^\top \theta} x_i\right\rVert \right\}.
\end{equation}
We see a few nice benefits of the KNG method in this example. If we were to use $\ell_n$ directly in the exponential mechanism, then not only would we expect worse asymptotic performance (as demonstrated in subsection \ref{s:QRsim}), but we see that the sensitivity calculation for the gradient only requires a bound on $X$, whereas the sensitivity of $\ell_n$ requires bounds on $Y$, $X$, and $\theta^*$. Furthermore, the objective perturbation mechanism cannot be used in this setting, because $\ell$ is not strongly convex, whereas the proofs for objective perturbation \citep{Chaudhuri2009,Chaudhuri2011,Kifer2012:PrivateCERM,Awan2018:Structure} all require strong convexity. In fact, the Hessian of $\ell_n$ is zero almost everywhere making the objective perturbation inapplicable.
 
Finally note that if we are only interested in estimating the $\tau^{th}$ quantile of a set of real numbers $Y_1,\ldots, Y_n$, we could set $X_i = 1$ for all $i=1,\ldots, n$, in which case KNG samples from 
\begin{equation}
f_n(\theta)\propto\exp \left\{ \frac{-\ep n}{4(1-\tau)}\left| \tau  - \hat F(\theta;Y)\right|\right\}.
\end{equation}
In fact, this is the \emph{Private Quantile} algorithm proposed by \citet{Smith2011}, who also establish strong utility guarantees for the algorithm; this exercise demonstrates that KNG could provide, or at least contribute to, a more unified framework for developing efficient privacy mechanisms.

\subsubsection{Quantile Regression Simulation}\label{s:QRsim}

In this section, we examine the empirical performance of the KNG mechanism on quantile regression compared to the exponential mechanism. KNG samples from the density \eqref{eq:QR} using the $\lVert \cdot \rVert_\infty$ norm and setting $C_X=1$, and the exponential mechanism samples from 
\[f_n(\theta) \propto \exp\left\{ \frac{-\ep}{4\max\{\tau,1-\tau\}(1+B)}  \ell_n(\theta;D)\right\}.\]
We assume, as in subsection \ref{s:linSims} that $B=1$. Details on the exponential mechanism can be found in the Supplementary Materials. Note that objective perturbation cannot be used in this setting, as discussed in subsection \ref{s:QR}.

For the simulations, we use $\tau=1/2$ and the true regression vector $\theta_{1/2}^* \in \mathbb R^{2}$ is $\theta_{1/2}^* = (0,-1)$. For each $n$ in $10^1,10^2,\ldots,10^5$ we run 100 replicates of Algorithm \ref{SimulationRegression} at $\epsilon=1$. Samples from KNG and the exponential mechanism are obtained using 1000 steps of a one-at-a-time MCMC algorithm. At the end, we compute the average distance over the 100 replicates for each estimator and for each sample size $n$. The results are plotted in Figure \ref{fig:LinearPlot}, taking the base 10 log of both axes. At each $n$ value and for each mechanism, the monte carlo standard errors are between $0.04403$ and $0.06028$, in terms of the log-scale.

We see in figure \ref{fig:QuantilePlot} that the non-private estimate appears as a straight line with slope $-1/2$, reflecting the fact that its estimation error is $O_p(n^{-1/2})$. We also see that the exponential mechanism approaches a line with slope $-1/2$, but with a higher intercept, reflecting that it has increased asymptotic variance. Last, we see that the error of KNG approaches the error line of the non-private estimator, suggesting that KNG has the same asymptotic rate as the non-private estimator.

While the utility guarantees of Theorem \ref{thm:utility1} do not apply in this setting, as the objective function is not strongly convex, the santized estimates still achieve $\epsilon$-DP and we see from Figure \ref{fig:QuantilePlot} that, empirically, KNG introduces $o_p(n^{-1/2})$ error in this setting as well. This suggests that the assumptions in Theorem \ref{thm:utility1} can likely be weakened, {  and KNG in fact produces efficient mechanisms for an even broader set of problems than Theorem \ref{thm:utility1} prescribes.}

\section{Conclusions} \label{s:con}
In this paper we presented a new privacy mechanism, KNG, that maintains much of the flexbility of the exponential mechanism, while having substantially better utility guarantees.  These guarantees are similar to those provided by objective perturbation, but privacy can be achieved with far fewer structural assumptions.  A major draw back of the mechanism is the same as for gradient descent, which can have trouble with local minima or saddle points.  Two interesting open questions concern the finite sample efficiency of KNG vs objective perturbation and if KNG can be adapted or combined with other methods to better handle multiple minima.  

We also believe that KNG has a great deal of potential for handling infinite dimensional and nonlinear problems.  For example, parameter spaces consisting of Hilbert spaces or Riemannian manifolds have structures that allow for the computation of gradients, and which might be amenable to KNG.   With Riemannian manifolds, the gradient is often viewed as a linear mapping over tangent spaces, while in Hilbert spaces, the gradient is often treated as a linear functional.  A major advantage of KNG over other mechanisms is the direct incorporation of a general $K$-norm.  \cite{Awan2019} showed that the exponential mechanism has major problems over function spaces, which are of interest in nonparametric statistics.  These issues could potentially be alleviated by KNG with a careful choice of norm.  Many interesting challenges remain in data privacy, especially if there is additional complicated structure in the parameters or data.


{  KNG has strong connections with prior DP mechanisms, especially the exponential mechanism and objective perturbation.  Indeed, like nearly every privacy mechanism, KNG can be phrased as very particular type of exponential mechanism, however this doesn't provide insight into why KNG achieves better statistical properties.  In particular, a key point is to consider the objective function that motivated the original statistical summary, which, when used with KNG produces sanitized estimators with better statistical performance than the classic implementation of the exponential mechanism.}

{  One downside of KNG is the issue of sampling, which is similar to the exponential mechanism in that sampling from these distributions is, in general, non-trivial. We show that for mean and quantile estimation, KNG results in distributions that are efficiently sampled. However, for linear and quantile regression, we used a one-at-a-time MCMC procedure (also used for exponential mechanism). Just like sampling from an posterior distribution, developing a convenient sampling scheme is case-by-case, but often a simple MCMC procedure works well in practice. }

\section*{Acknowledgements}
This research was supported in part by NSF DMS 1712826, NSF SES 1853209, and NSF SES-153443 to The Pennsylvania State University. The first author is also grateful for the hospitality of the Simons Institute for the Theory of Computing at UC Berkeley.

\bibliographystyle{plainnat} 
\bibliography{./DataPrivacyBib}{}

\begin{thebibliography}{37}
\providecommand{\natexlab}[1]{#1}
\providecommand{\url}[1]{\texttt{#1}}
\expandafter\ifx\csname urlstyle\endcsname\relax
  \providecommand{\doi}[1]{doi: #1}\else
  \providecommand{\doi}{doi: \begingroup \urlstyle{rm}\Url}\fi

\bibitem[Aggarwal and Philip(2008)]{Aggarwal2008general}
Charu~C Aggarwal and S~Yu Philip.
\newblock A general survey of privacy-preserving data mining models and
  algorithms.
\newblock In \emph{Privacy-preserving data mining}, pages 11--52. Springer,
  2008.

\bibitem[Awan and Slavkovi\'{c}(2018)]{Awan2018:Binomial}
Jordan Awan and Aleksandra Slavkovi\'{c}.
\newblock Differentially private uniformly most powerful tests for binomial
  data.
\newblock In S.~Bengio, H.~Wallach, H.~Larochelle, K.~Grauman, N.~Cesa-Bianchi,
  and R.~Garnett, editors, \emph{Advances in Neural Information Processing
  Systems 31}, pages 4208--4218. Curran Associates, Inc., 2018.

\bibitem[Awan and Slavkovi\'c(2018)]{Awan2018:Structure}
Jordan Awan and Aleksandra Slavkovi\'c.
\newblock Structure and sensitivity in differential privacy: Comparing $k$-norm
  mechanisms.
\newblock \emph{ArXiv e-prints}, January 2018.
\newblock Under Review.

\bibitem[Awan et~al.(2019)Awan, Kenney, Reimherr, and Slavkovi\'c]{Awan2019}
Jordan Awan, Ana Kenney, Matthew Reimherr, and Aleksandra Slavkovi\'c.
\newblock Benefits and pitfalls of the exponential mechanismwith applications
  to hilbert spaces and functional pca.
\newblock In \emph{Proceedings of the 36th International Conference on
  International Conference on Machine Learning}, ICML'19, pages 374--384.
  JMLR.org, 2019.

\bibitem[Blum et~al.(2013)Blum, Ligett, and Roth]{blum2013learning}
Avrim Blum, Katrina Ligett, and Aaron Roth.
\newblock A learning theory approach to noninteractive database privacy.
\newblock \emph{Journal of the ACM (JACM)}, 60\penalty0 (2):\penalty0 12, 2013.

\bibitem[Bun and Steinke(2016)]{Bun2016}
Mark Bun and Thomas Steinke.
\newblock Concentrated differential privacy: Simplifications, extensions, and
  lower bounds.
\newblock In \emph{TCC}, 2016.

\bibitem[Canonne et~al.(2019)Canonne, Kamath, McMillan, Smith, and
  Ullman]{Canonne2019}
Cl{\'e}ment~L Canonne, Gautam Kamath, Audra McMillan, Adam Smith, and Jonathan
  Ullman.
\newblock The structure of optimal private tests for simple hypotheses.
\newblock In \emph{Proceedings of the 51st Annual ACM SIGACT Symposium on
  Theory of Computing}, pages 310--321. ACM, 2019.

\bibitem[Chaudhuri and Monteleoni(2009)]{Chaudhuri2009}
Kamalika Chaudhuri and Claire Monteleoni.
\newblock Privacy-preserving logistic regression.
\newblock In D.~Koller, D.~Schuurmans, Y.~Bengio, and L.~Bottou, editors,
  \emph{Advances in Neural Information Processing Systems 21}, pages 289--296.
  Curran Associates, Inc., 2009.

\bibitem[Chaudhuri et~al.(2011)Chaudhuri, Monteleoni, and
  Sarwate]{Chaudhuri2011}
Kamalika Chaudhuri, Claire Monteleoni, and D.~Sarwate.
\newblock Differentially private empirical risk minimization.
\newblock In \emph{Journal of Machine Learning Research}, volume~12, pages
  1069--1109, 2011.

\bibitem[Chaudhuri et~al.(2013)Chaudhuri, Sarwate, and Sinha]{Chaudhuri2013}
Kamalika Chaudhuri, Anand~D. Sarwate, and Kaushik Sinha.
\newblock A near-optimal algorithm for differentially-private principal
  components.
\newblock \emph{Journal of Machine Learning Research}, 14\penalty0
  (1):\penalty0 2905--2943, January 2013.
\newblock ISSN 1532-4435.

\bibitem[Duchi et~al.(2013)Duchi, Jordan, and Wainwright]{duchi2013local}
John~C Duchi, Michael~I Jordan, and Martin~J Wainwright.
\newblock Local privacy and statistical minimax rates.
\newblock In \emph{2013 IEEE 54th Annual Symposium on Foundations of Computer
  Science}, pages 429--438. IEEE, 2013.

\bibitem[Dwork and Lei(2009)]{Dwork2009}
Cynthia Dwork and Jing Lei.
\newblock Differential privacy and robust statistics.
\newblock In \emph{Proceedings of the Forty-first Annual ACM Symposium on
  Theory of Computing}, STOC '09, pages 371--380, New York, NY, USA, 2009. ACM.
\newblock ISBN 978-1-60558-506-2.
\newblock \doi{10.1145/1536414.1536466}.

\bibitem[Dwork and Roth(2014)]{Dwork2014}
Cynthia Dwork and Aaron Roth.
\newblock The algorithmic foundations of differential privacy.
\newblock \emph{Found. Trends Theor. Comput. Sci.}, 9\penalty0
  (3\&\#8211;4):\penalty0 211--407, August 2014.
\newblock ISSN 1551-305X.
\newblock \doi{10.1561/0400000042}.

\bibitem[Dwork and Rothblum(2016)]{Dwork2016}
Cynthia Dwork and Guy~N. Rothblum.
\newblock Concentrated differential privacy.
\newblock \emph{CoRR}, abs/1603.01887, 2016.

\bibitem[Dwork et~al.(2006)Dwork, McSherry, Nissim, and
  Smith]{Dwork2006:Sensitivity}
Cynthia Dwork, Frank McSherry, Kobbi Nissim, and Adam Smith.
\newblock \emph{Calibrating Noise to Sensitivity in Private Data Analysis},
  pages 265--284.
\newblock Springer Berlin Heidelberg, Berlin, Heidelberg, 2006.
\newblock ISBN 978-3-540-32732-5.
\newblock \doi{10.1007/11681878_14}.

\bibitem[Dwork et~al.(2017)Dwork, Smith, Steinke, and Ullman]{Dwork2017}
Cynthia Dwork, Adam Smith, Thomas Steinke, and Jonathan Ullman.
\newblock Exposed! a survey of attacks on private data.
\newblock \emph{Annual Review of Statistics and Its Application}, 4\penalty0
  (1):\penalty0 61--84, 2017.
\newblock \doi{10.1146/annurev-statistics-060116-054123}.

\bibitem[Friedman and Schuster(2010)]{friedman2010data}
Arik Friedman and Assaf Schuster.
\newblock Data mining with differential privacy.
\newblock In \emph{Proceedings of the 16th ACM SIGKDD international conference
  on Knowledge discovery and data mining}, pages 493--502. ACM, 2010.

\bibitem[Gori and Tesi(1992)]{gori1992problem}
Marco Gori and Alberto Tesi.
\newblock On the problem of local minima in backpropagation.
\newblock \emph{IEEE Transactions on Pattern Analysis \& Machine Intelligence},
  \penalty0 (1):\penalty0 76--86, 1992.

\bibitem[Hao et~al.(2007)Hao, Naiman, and Naiman]{hao2007quantile}
Lingxin Hao, Daniel~Q Naiman, and Daniel~Q Naiman.
\newblock \emph{Quantile regression}.
\newblock Number 149. Sage, 2007.

\bibitem[Hardt and Talwar(2010)]{Hardt2010}
Moritz Hardt and Kunal Talwar.
\newblock On the geometry of differential privacy.
\newblock In \emph{Proceedings of the Forty-second ACM Symposium on Theory of
  Computing}, STOC '10, pages 705--714, New York, NY, USA, 2010. ACM.
\newblock ISBN 978-1-4503-0050-6.
\newblock \doi{10.1145/1806689.1806786}.

\bibitem[Karwa and Vadhan(2017)]{Karwa2017}
Vishesh Karwa and Salil~P. Vadhan.
\newblock Finite sample differentially private confidence intervals.
\newblock \emph{CoRR}, abs/1711.03908, 2017.

\bibitem[Kattis and Nikolov(2017)]{Kattis2017}
Assimakis Kattis and Aleksandar Nikolov.
\newblock {Lower Bounds for Differential Privacy from Gaussian Width}.
\newblock In Boris Aronov and Matthew~J. Katz, editors, \emph{33rd
  International Symposium on Computational Geometry (SoCG 2017)}, volume~77 of
  \emph{Leibniz International Proceedings in Informatics (LIPIcs)}, pages
  45:1--45:16, Dagstuhl, Germany, 2017. Schloss Dagstuhl--Leibniz-Zentrum fuer
  Informatik.
\newblock ISBN 978-3-95977-038-5.
\newblock \doi{10.4230/LIPIcs.SoCG.2017.45}.

\bibitem[Kifer et~al.(2012)Kifer, Smith, and Thakurta]{Kifer2012:PrivateCERM}
D~Kifer, A~Smith, and A~Thakurta.
\newblock Private convex empirical risk minimization and high-dimensional
  regression.
\newblock \emph{Journal of Machine Learning Research}, 1:\penalty0 1--41, 01
  2012.

\bibitem[Kifer and Lin(2012)]{Kifer2012:Axiomatic}
Daniel Kifer and Bing-Rong Lin.
\newblock An axiomatic view of statistical privacy and utility.
\newblock \emph{Journal of Privacy and Confidentiality}, 4\penalty0
  (1):\penalty0 5--49, 2012.

\bibitem[Koenker and Hallock(2001)]{koenker2001quantile}
Roger Koenker and Kevin~F Hallock.
\newblock Quantile regression.
\newblock \emph{Journal of economic perspectives}, 15\penalty0 (4):\penalty0
  143--156, 2001.

\bibitem[Lane et~al.(2014)Lane, Stodden, Bender, and Nissenbaum]{Lane2014}
Julia Lane, Victoria Stodden, Stefan Bender, and Helen Nissenbaum.
\newblock \emph{Privacy, big data, and the public good: Frameworks for
  engagement}.
\newblock Cambridge University Press, 2014.

\bibitem[Machanavajjhala and Kifer(2015)]{Machanavajjhala2015}
Ashwin Machanavajjhala and Daniel Kifer.
\newblock Designing statistical privacy for your data.
\newblock \emph{Commun. ACM}, 58:\penalty0 58--67, 2015.

\bibitem[McSherry and Talwar(2007)]{McSherry2007}
Frank McSherry and Kunal Talwar.
\newblock Mechanism design via differential privacy.
\newblock In \emph{Proceedings of the 48th Annual IEEE Symposium on Foundations
  of Computer Science}, FOCS '07, pages 94--103, Washington, DC, USA, 2007.
  IEEE Computer Society.
\newblock ISBN 0-7695-3010-9.
\newblock \doi{10.1109/FOCS.2007.41}.

\bibitem[Minami et~al.(2016)Minami, Arai, Sato, and Nakagawa]{Minami2016}
Kentaro Minami, Hiromi Arai, Issei Sato, and Hiroshi Nakagawa.
\newblock Differential privacy without sensitivity.
\newblock In D.~D. Lee, M.~Sugiyama, U.~V. Luxburg, I.~Guyon, and R.~Garnett,
  editors, \emph{Advances in Neural Information Processing Systems 29}, pages
  956--964. Curran Associates, Inc., 2016.

\bibitem[Minsker(2015)]{minsker2015geometric}
Stanislav Minsker.
\newblock Geometric median and robust estimation in banach spaces.
\newblock \emph{Bernoulli}, 21\penalty0 (4):\penalty0 2308--2335, 2015.

\bibitem[Sheffet(2017)]{Sheffet2017}
Or~Sheffet.
\newblock Differentially private ordinary least squares.
\newblock In Doina Precup and Yee~Whye Teh, editors, \emph{Proceedings of the
  34th International Conference on Machine Learning}, volume~70 of
  \emph{Proceedings of Machine Learning Research}, pages 3105--3114,
  International Convention Centre, Sydney, Australia, 06--11 Aug 2017. PMLR.

\bibitem[Smith(2011)]{Smith2011}
Adam Smith.
\newblock Privacy-preserving statistical estimation with optimal convergence
  rates.
\newblock In \emph{Proceedings of the Forty-third Annual ACM Symposium on
  Theory of Computing}, STOC '11, pages 813--822, New York, NY, USA, 2011. ACM.
\newblock ISBN 978-1-4503-0691-1.
\newblock \doi{10.1145/1993636.1993743}.

\bibitem[Song et~al.(2013)Song, Chaudhuri, and Sarwate]{Song2013}
Shuang Song, Kamalika Chaudhuri, and Anand~D. Sarwate.
\newblock Stochastic gradient descent with differentially private updates.
\newblock In \emph{in Proceedings of the Global Conference on Signal and
  Information Processing. IEEE}, pages 245--248, 2013.

\bibitem[van~der Vaart(2000)]{VanDerVaart2000}
A.W. van~der Vaart.
\newblock \emph{Asymptotic Statistics}.
\newblock Cambridge Series in Statistical and Probabilistic Mathematics.
  Cambridge University Press, 2000.
\newblock ISBN 9781107268449.

\bibitem[Wang et~al.(2015)Wang, Fienberg, and Smola]{Wang2015:PrivacyFree}
Yu-Xiang Wang, Stephen~E. Fienberg, and Alexander~J. Smola.
\newblock Privacy for free: Posterior sampling and stochastic gradient monte
  carlo.
\newblock In \emph{Proceedings of the 32nd International Conference on
  International Conference on Machine Learning - Volume 37}, ICML'15, pages
  2493--2502. JMLR.org, 2015.

\bibitem[Wasserman and Zhou(2010)]{Wasserman2010}
Larry Wasserman and Shuheng Zhou.
\newblock A statistical framework for differential privacy.
\newblock \emph{JASA}, 105:489:\penalty0 375--389, 2010.

\bibitem[Zhang et~al.(2012)Zhang, Zhang, Xiao, Yang, and Winslett]{Zhang2012}
Jun Zhang, Zhenjie Zhang, Xiaokui Xiao, Yin Yang, and Marianne Winslett.
\newblock Functional mechanism: Regression analysis under differential privacy.
\newblock \emph{Proc. VLDB Endow.}, 5\penalty0 (11):\penalty0 1364--1375, July
  2012.
\newblock ISSN 2150-8097.
\newblock \doi{10.14778/2350229.2350253}.

\end{thebibliography}

\section{Proofs}
\begin{proof}[Proof of Theorem 3.2.]
For notational simplicity, we assume that the base measure, $\mu$, is Lebesgue.    
The density of the KNG mechanism can then be expressed as
\[
f_n(\theta) = c_n^{-1} \exp \left \{\frac {-\ep} {2\Delta(\hat \theta+z/n) } \lVert \nabla \ell_n(\theta)\rVert_K\right \} ,
\]
where $c_n$ is the normalizing constant.  
Define the random variable $Z = n(\tilde \theta - \hat \theta)$, then its density is given by
\[
f_n(z) = c_n^{-1} n^{-1} \exp \left \{ \frac{-\ep}{2\Delta(\hat \theta+z/n)}\lVert \nabla \ell_n(\hat \theta + z/n)\rVert_K\right \}.
\]
Using a one term Taylor expansion, we have by Assumption (2) and (3) that
\begin{align*}
 \nabla \ell_n(\hat \theta + z/n) 
&=    \nabla\ell_n(\hat \theta) + \bH_n(\hat \theta)z / n+o_p(1)\\
&= \bH_n(\hat \theta)z / n+o_p(1),
\end{align*}
where $\bH_n(\theta)$ is the Hessian matrix of $\ell_n$ evaluated at $\theta$.
Recall that
\begin{align*}
 c_n n
 &=  \int \exp \left \{ \frac{\ep}{2\De(\hat \theta + z/n)}\left(-\lVert\nabla\ell_n(\hat \theta + z/n)\rVert_K\right)\right \} \ dz.
\end{align*}
By Assumption (1), $\ell_n$ is strongly convex and thus 
\[\frac{1}{\Delta(\hat \theta + z/n)}\left\langle \nabla \ell_n(\hat \theta + z/n)-\nabla \ell_n(\hat \theta), z/n\right\rangle \geq \frac{n\alpha}{\Delta} \lVert z/n\rVert_2^2.\]
Combining the Cauchy-Schwartz inequality with the fact that $\nabla\ell_n(\hat \theta)=0$ implies 
\[\frac{1}{\Delta(\hat \theta + z/n)} \lVert \nabla \ell_n(\hat \theta + z/n)\rVert_2 \geq \frac{n\alpha}{\Delta} \lVert z/n\rVert_2.\]
By the equivalence of norms on $\mathbb R^d$, we have that 
\[\frac{-1}{\Delta(\hat \theta + z/n)}\lVert \nabla \ell_n(\hat \theta + z/n) \rVert_K\leq \frac{-C\alpha}{\Delta}\lVert z \rVert_2,\]
for some constant $C$.
Since $\exp\{-\|z\|_2\}$ is integrable, we can apply the dominated convergence theorem to conclude that the constants converge to a nonzero and finite quantity. Since $\Delta(\theta)$ is continuous in $\theta$, we also have that $\Delta(\hat \theta + z/n) \rightarrow \Delta(\theta^*)$. Putting everything together, we can conclude that
\begin{align*}
f_n(z) \to f(z) \propto \exp\left \{\frac{-\ep}{2\De(\theta^*)}\lVert \Sigma^{-1} z\rVert_K\right\},
\end{align*}
which is the density of the $K$-norm mechanism. 
Applying Scheffe's Theorem, we thus have both convergence in distribution as well as convergence in total variation to a $K$-norm mechanism
\end{proof}

\section{Linear Regression}
\subsection{Exponential Mechanism}
Our objective function is $\ell(\theta; D) = \sum_{i=1}^n (y_i - x_i^\top \theta)^2$. For the exponential mechanism, we need to bound the sensitivity of  $\ell(\theta)$:
\begin{align*}
    |\ell_n(\theta;D) - \ell_n(\theta;D')|&= |(y_1 - x_1^\top\theta)^2 - (y_2 - x_2^\top\theta)^2|\\
    &\leq\sup_{y_1,x_1,\theta} (y_1 - x_1^\top\theta)^2\\
    &\leq \sup_{x_1,\theta} (1 + |x_1^\top\theta|)^2\\
    &\leq \sup_{\theta} (1+ \lVert \theta\rVert_1)^2\\
    & = (1+B)^2,
\end{align*}
where we used the assumptions that $\lVert x_1\rVert_\infty\leq 1$, $|y_1|\leq 1$, and $\lVert \theta^*\rVert_1\leq B$. 
The exponential mechanism with objective function $\ell(\theta)$  draws $\theta$ from 
\[f_n(\theta)\propto \exp \left( \frac{-\ep}{2(1 + B)^2} \sum_{i=1}^n (y_i - x_i^\top \theta)^2\right),\]
with respect to the uniform measure on $\{\theta \mid \lVert \theta\rVert_1\leq B\}$. 

\subsection{Objective Perturbation}

For objective perturbation, we use the version stated in \citet{Awan2018:Structure}, which allows us to use the same bound on the gradient as developed in subsection 4.2. Objective perturbation also requires a bound on the eigenvalues of the hessian for one datapoint:
\begin{align*}
    \text{max eigenvalue} (2x_1 x_1^\top)&\leq \tr(2x_1 x_1^\top)\\
    &=2\tr(x_1^\top x_1)\\
    &\leq 2\sum_{j=1}^d |x_{1j}|^2\\
    &\leq 2d.
\end{align*}

Objective perturbation then draws a random vector $b$ from the density $f(b) \propto \exp\left(- \frac{\ep}{8(1+B)} \lVert b\rVert_\infty\right)$ (a simple sampling algorithm for $f(b)$ is stated in \citet{Awan2018:Structure}), and then finds the optimum of the modified objective:
\[\arg\min_{\lVert\theta\lVert_1\leq 1} \ell_n(\theta;D) + \frac{\gamma}{2} \theta^\top \theta + \theta^\top b,\]
where $\gamma= \frac{2d}{\exp(\ep/2)-1}$. Since $\ell$ is convex, this new objective is also convex. We restrict the search space of $\theta$ to $\{\theta \mid \lVert \theta \rVert_1\leq B\}$, since we assume that $\lVert \theta^* \rVert\leq B$ for our bounds. 

\section{Quantile Regression}
\subsection{Exponential Mechanism}
Our objective function is $\ell_n(\theta;D) = \sum_{i=1}^n \rho_\tau(y_i - x_i^\top \beta)$. For the exponential mechanism we need to assume additional bounds on the data as well as on $\theta^*$. As in the linear regression case, we assume that $-1\leq y_i \leq 1$, $-1\leq x_i\leq 1$, and $\lVert\theta \rVert_1\leq B$. We bound the sensitivity of $\ell_n$ as
\begin{align*}
    \left| \ell_n(\theta;D) - \ell_n(\theta;D')\right|&\leq 2\left| \ell_n(\theta;D)\right|\\
    &\leq\sup_{y_1,x_1,\theta}2\max\{\tau,1-\tau\} \left|y_1 - x_1^\top \theta\right|\\
    &\leq 2\max\{\tau,1-\tau\} (1+B).
\end{align*}

The exponential mechanism then samples from the density
\[f_n(\theta) \propto \exp\left\{ \frac{-\ep}{4\max\{\tau,1-\tau\}(1+B)}  \ell_n(\theta;D)\right\}.\]

\end{document}